\documentclass[conference]{IEEEtran}
\IEEEoverridecommandlockouts
% The preceding line is only needed to identify funding in the first footnote. If that is unneeded, please comment it out.
\usepackage{cite}
\usepackage{amsmath,amssymb,amsfonts}
\usepackage{algorithmic}
\usepackage{graphicx}
\usepackage{textcomp}
\usepackage{xcolor}

\usepackage{amsthm}
\usepackage{hyperref}
\usepackage{multirow}
\usepackage{subfigure}
\usepackage{footnote}

\def\BibTeX{{\rm B\kern-.05em{\sc i\kern-.025em b}\kern-.08em
    T\kern-.1667em\lower.7ex\hbox{E}\kern-.125emX}}

\DeclareRobustCommand*{\IEEEauthorrefmark}[1]{%
    \raisebox{0pt}[0pt][0pt]{\textsuperscript{\footnotesize\ensuremath{#1}}}}

\begin{document}
\newtheorem{theorem}{Theorem}
\theoremstyle{definition}
\newtheorem{definition}{Definition}
\title{Privacy-Preserving Cross-Domain Sequential Recommendation\\
	% {\footnotesize \textsuperscript{*}Note: Sub-titles are not captured in Xplore and
	% should not be used}
	% \thanks{Identify applicable funding agency here. If none, delete this.}
}

\author{
	\IEEEauthorblockN{
	Zhaohao Lin\IEEEauthorrefmark{1}, 
	Weike Pan\IEEEauthorrefmark{1,*} and
	Zhong Ming\IEEEauthorrefmark{1,2,*}\thanks{*: Co-corresponding authors}}
	\IEEEauthorblockA{\IEEEauthorrefmark{1}\textit{College of Computer Science and Software Engineering, Shenzhen University}}
	\IEEEauthorblockA{\IEEEauthorrefmark{2}\textit{Guangdong Laboratory of Artificial Intelligence and Digital Economy (SZ), Shenzhen University}}
	\IEEEauthorblockA{Shenzhen, China}
	\IEEEauthorblockA{linzhaohao2021@email.szu.edu.cn, \{panweike,mingz\}@szu.edu.cn}
}

\maketitle

\begin{abstract}
	Cross-domain sequential recommendation is an important development direction of recommender systems.
	It combines the characteristics of sequential recommender systems and cross-domain recommender systems, which can capture the dynamic preferences of users and alleviate the problem of cold-start users.
	However, in recent years, people pay more and more attention to their privacy.
	They do not want other people to know what they just bought, what videos they just watched, and where they just came from.
	How to protect the users' privacy has become an urgent problem to be solved.
	In this paper, we propose a novel privacy-preserving cross-domain sequential recommender system (PriCDSR), which can provide users with recommendation services while preserving their privacy at the same time.
	Specifically, we define a new differential privacy on the data, taking into account both the ID information and the order information.
	Then, we design a random mechanism that satisfies this differential privacy and provide its theoretical proof.
	Our PriCDSR is a non-invasive method that can adopt any cross-domain sequential recommender system as a base model without any modification to it.
	To the best of our knowledge, our PriCDSR is the first work to investigate privacy issues in cross-domain sequential recommender systems.
	We conduct experiments on three domains, and the results demonstrate that our PriCDSR, despite introducing noise, still outperforms recommender systems that only use data from a single domain.
\end{abstract}

\begin{IEEEkeywords}
	cross-domain sequential recommendation, differential privacy, privacy-preserving
\end{IEEEkeywords}

\section{Introduction}
\label{sec:intro}

Today, recommender systems have emerged as a crucial technology that can offer significant practical benefits to commercial enterprises.
It has played an important role in many applications, including e-commerce, online video platforms, social media, and mapping services.
A recommender system typically employs a model that leverages user-item interaction records, user profiles, item attributes, and other relevant information to predict user preferences and then provide personalized recommendation services for users.

Considering that a user's preferences may change over time, an increasing number of recent works pay more attention to temporal information and propose many models that exploit sequential information.
A recommender system using sequential information is called a sequential recommender system.
By combining the users' long-term and short-term preferences, sequential recommender systems can more effectively capture a user's interests and thus provides better recommendation service.
However, sequential recommender systems still suffer from the cold-start and data sparsity problems.
Fortunately, there is another research direction in recommender systems that can effectively mitigate these two problems, namely, cross-domain recommender systems (CDR).
By leveraging additional auxiliary data from other business applications, CDR can enhance the model performance.
These applications are called auxiliary domains, while the application to improve recommendation accuracy is called target domain.
In recent years, some researchers have successfully combined sequential recommender systems and cross-domain recommender systems, leading to the emergence of cross-domain sequential recommender systems (CDSR).
The models of CDSR not only effectively utilize the users' long-term and short-term preferences, but also transfer user and item knowledge from other domains, resulting in improved recommendation performance.

Existing cross-domain sequential recommender systems usually assume that the data of each domain is visible to all the domains.
In other words, the data is public.
However, this practice may lead to legal problems in real-world scenarios, because the data usually includes the users' historical interacted items, which are the privacy of the users.
After the implementation of data protection laws and regulations such as the General Data Protection
Regulation (GDPR) \footnote{\url{https://gdpr-info.eu}}, it is illegal to collect, process or exchange a user's data without the user's consent.
As far as we know, no researchers have proposed studies or given ideas to tackle this new challenge.
Therefore, we aim to propose a viable solution to this new challenge, i.e., privacy-preserving cross-domain sequential recommender systems.

In this paper, we propose a privacy-preserving cross-domain sequential recommender system (PriCDSR), which can be applied to CDSR to provide users with recommendation services while preserving their privacy.
To achieve this goal, we introduce the definition of differential privacy in users' historical interaction sequences, taking into account both the ID and the order information.
Then, we design a random mechanism that satisfies this differential privacy and provide its theoretical proof.

Our main contributions can be summarized as follows:
(i) to the best of our knowledge, our PriCDSR is the first work on privacy-preserving cross-domain sequential recommender systems;
(ii) we provide the theoretical proof that our PriCDSR satisfies the differential privacy in users' historical interaction sequences;
(iii) we conduct experiments on six domain couples and the results demonstrate that our PriCDSR, despite introducing noise, still outperforms recommender systems that only use data from a single domain.

\section{Related Work}
\label{sec:related}

\subsection{Cross-Domain Sequential Recommendation}

\subsubsection{Merged Flow Cross-Domain Sequential Recommendation}
\label{sec:mfcdsr}
There exist some works that merge the data in the target domain and the auxiliary data, which we call merged flow cross-domain sequential recommendation (MFCDSR).
Specifically, the interaction records of each user in both domains are rearranged in chronological order by timestamps.
$\pi$-net~\cite{piNet} considers the shared account problem in smart TV recommendation.
The authors propose shared account filter unit (SFU) to capture different user preferences within an account, and cross-domain transfer unit (CTU) to model the transfer of user preferences between different domains.
PSJNet~\cite{PSJNet} is a framework for MFCDSR.
The authors reformulate $\pi$-net, regard SFU and CTU as a split-by-join unit, and then propose PSJNet-II ($\pi$-net is regarded as PSJNet-I) that replaces the split-by-join unit with a split-and-join scheme.
In addition to the users' historical interaction sequences, MIFN~\cite{MIFN} also incorporates the item information represented as a knowledge graph.
DA-GCN~\cite{DA-GCN} and DDGHM~\cite{DDGHM} are graph-based models, and they use graphs for better user and item representations.
C$^2$DSR~\cite{C2DSR} utilizes graph neural network and self-attention mechanism to obtain single-domain and cross-domain representations of users.

\subsubsection{Segregated Flow Cross-Domain Sequential Recommendation}
\label{sec:sfcdsr}

If a model does not utilize the timestamps to merge data from the two domains, we refer to it as a segregated flow cross-domain sequential recommendation (SFCDSR).
MiNet~\cite{MiNet} models three types of user interest (i.e., long-term interest across domains, short-term interest from the auxiliary domain, and short-term interest in the target domain), and then aggregates them to obtain the final user interest.
SEMI~\cite{SEMI} is the first work that investigates e-commerce micro-video recommendation.
It enhances the similarity between the representations of a user across different domains in the same session, while reducing the similarity to the representations of other users or sessions.
In DASL~\cite{DASL}, the authors propose two techniques called dual embedding (DE) and dual attention (DA).
Specifically, DE uses an orthogonal matrix to map the user embeddings in one domain to the other domain, and then minimizes the gap between the mapped embeddings and the embeddings in the other domain.
CD-ASR~\cite{CD-ASR} and CD-SASRec~\cite{CD-SASRec} employ self-attention mechanism for domain-specific user representations in both auxiliary and target domains.
The two representations are then combined for improved recommendation.
RecGURU~\cite{RecGURU} introduces adversarial learning into cross-domain sequential recommendation.
It uses an encoder to obtain generalized user representations (GUR) from the users' historical interaction sequences, and a discriminator to identify which domain the GURs come from.
Then, RecGURU alternately optimizes the model and the discriminator to achieve knowledge transfer.

The above works are all classic or recent works on cross-domain sequential recommender systems.
However, both MFCDSR and SFCDSR do not consider privacy issues, so it is difficult to legally use them in real-world scenarios.

\subsection{Privacy-Preserving Cross-Domain Recommendation}
\label{sec:ppcdr}

While there is a lack of research on privacy-preserving cross-domain sequential recommender systems, several studies have explored privacy concerns in cross-domain recommender systems (CDR).

PriCDR~\cite{PriCDR} is a two-stage model.
In the first stage, PriCDR uses Johnson-Lindenstrauss transform (JLT) and sparse-aware JLT (SJLT) to add noise to the rating matrix of the auxiliary domain to satisfy differential privacy.
In the second stage, PriCDR uses a proposed model called HeteroCDR to model the user preferences using the noise-added rating matrix of the auxiliary domain and the original rating matrix of the target domain.
PPGenCDR~\cite{PPGenCDR} is also a two-stage model.
In the first stage, it uses SPPG, a module based on generative adversarial network (GAN), to extract the user preferences in the auxiliary domain (which are represented as the generator of SPPG), and then transfers the extracted user preferences (i.e., the generator) to the target domain.
In the second stage, a robust CDR module in the target domain utilizes the received generator and the raw data of the target domain to enhance the recommendation performance of the target domain.
P2FCDR~\cite{P2FCDR} is the first work to study the privacy-preserving problem of dual-target cross-domain recommender systems.
It uses local differential privacy (LDP) to protect user embeddings exchanged between the two domains.
The approach also uses a feature-level gated selecting vector to refine the information fusion process of the transferred embeddings.
The work~\cite{eggs} proposes a method for reconstructing user composite embeddings for cross-domain recommender systems.
The authors propose a deep learning network for reconstructing user embeddings from various auxiliary domains into a composite user embedding, thereby leveraging information from multiple auxiliary domains to improve the accuracy of downstream tasks.

There are two other works, FedCDR~\cite{FedCDR} and FedCT~\cite{FedCT}, that also consider privacy issues, but they adopt a different research problem setting than ours.
In their research problem, each client is free to use data from both domains.
However, our research problem does not consider the clients.

\subsection{Differential Privacy for Sequential Data}
\label{sec:dps}

Several studies have focused on preserving the privacy of point-of-interest (POI) data~\cite{poi1,poi2,poi3}.
However, most of these studies aim to protect the geographic coordinates (i.e., latitudes and longitudes) of the POIs.
Our research, on the other hand, is framed in a more general setting that is not restricted to any particular application scenario.
As a result, these methods are not effective in addressing our research problem since the items in CDSR do not have latitude and longitude information.

In the work~\cite{CLDP}, the authors propose the notion of condensed local differential privacy (CLDP) and propose a mechanism to satisfy CLDP.
The mechanism is designed to protect two types of sensitive information of sequential data, i.e., length and content.
However, it does not protect the order of items in the sequences, which may lead to privacy issues.

In another work~\cite{TLDP}, the authors propose the notion of the local differential privacy in the temporal setting (TLDP) and propose three mechanisms to satisfy TLDP.
However, TLDP only considers the privacy of the order of the items, but not the privacy of the IDs of the items.
Specifically, TLDP allows the replacement of an item in a sequence only with another item from the same sequence, but not with an item from the entire item set that is not present in the sequence.

\section{Preliminaries}
\label{sec:preliminaries}
\subsection{Problem Definition}
\label{sec:proDef}
In this paper, we consider a cross-domain sequential recommendation problem.
Without loss of generality, we assume that there are only two domains, an auxiliary domain and a target domain.
These two domains have the same set of users, denoted as $\mathcal{U}$, where $|\mathcal{U}|=n$.
However, the item sets of them are mutually exclusive.
The item set of the auxiliary domain is denoted as $\mathcal{I}^A$, where $|\mathcal{I}^A|=m_A$.
And that of the target domain is denoted as $\mathcal{I}^T$, where $|\mathcal{I}^T|=m_T$.
In each domain, we only consider a kind of user-item one-class feedback.

Our goal is to improve the recommendation performance on the target domain using the data on the auxiliary domain, while also protecting the data privacy of the auxiliary domain.
We believe that there are no privacy issues within a domain.
Therefore, all data within a domain is freely available for use within that domain.
Note that although we only consider the case of one target domain and one auxiliary domain, it can be easily extended to the case of one target domain and multiple auxiliary domains.

\subsection{Sequential Differential Privacy}

The items that user $u$ interacts with in the auxiliary domain (denoted as $\mathcal{I}^A_u$), is obviously the privacy of user $u$ and thus needs to be protected.
We call this kind of information ID information.
In addition, we must consider the sequential information of users' historical interaction sequences, which reveals their interests.
For example, in POI recommendation, a user's interaction subsequence from $A$ to $B$ indicates that the user has traveled from $A$ to $B$.
On the contrary, a subsequence from $B$ to $A$, despite the same POIs being involved, denotes that the user has moved from $B$ to $A$.
Obviously, the information of $A \rightarrow B$ and $B \rightarrow A$ subsequences is quite different.
This scenario is not exclusive to POI recommendation and may arise in other types of recommendation tasks as well.
This kind of users' interests in the auxiliary domain should also be considered private, so we also need to protect the sequential information.

Using differential privacy to only protect the ID information without regarding to the order information may leak the user's privacy.
It seems that perturbing the item IDs in the users' historical interaction sequences automatically perturbs the order of the item IDs.
However, this is only a side effect of protecting the ID information, and does not adequately protect the order information, let alone guarantee it theoretically.
To this end, we propose sequential differential privacy (SDP), a new differential privacy that protects both the ID and sequential information of the users' interaction sequences.
We have a sequential data matrix as defined in Definition \ref{def:sdm}.
Under such setting, we also define the neighbouring sequential data matrices in Definition \ref{def:nsdm}.
Finally, we define SDP in Definition \ref{def:sdp}.

\begin{definition}[Sequential Data Matrix]
	\label{def:sdm}
	The data in the auxiliary domain is denoted as a set $\mathcal{R}^A = \{\mathcal{R}^A_u|u \in \{1,2,\ldots,n\}\}$.
	Each user $u$ has an ordered sequence of interacted items in the auxiliary domain, i.e., $\mathcal{R}^A_u = (i^1_u,i^2_u,\ldots,i^{|\mathcal{R}^A_u|}_u)$.
	We fix the length of the users' historical interaction sequences as $L$.
	We only take the last $L$ item IDs for the interaction sequence if that is too long, and pad zeros in the front of the sequence if it is too short.
	We only keep the first interaction record between a user and an item, and delete the subsequent repeated interaction records, which is a common setting in sequential recommender systems~\cite{SASRec,FISSA}.
	Therefore, the data in the auxiliary domain can be denoted as a sequential data matrix $\mathbf{R}^A\in\mathbb{N}^{n \times L}$, where $\mathbf{R}^A_{u,\ell}=i^\ell_u$.
\end{definition}

\begin{definition}[Neighbouring Sequential Data Matrices]
	\label{def:nsdm}
	Two sequential data matrices, $\mathbf{R}$ and $\mathbf{R}'$, are neighbouring if $\mathbf{R}'$ can be obtained from $\mathbf{R}$ by modifying one user-item interaction or by swapping any two interactions of a single user.
	Formally, in the former situation, there exists one pair $(u_0,\ell_0)$ with $1 \leq u_0 \leq n$ and $1 \leq \ell_0 \leq L$ such that
	$$
		\begin{cases}
			\mathbf{R}_{u,\ell} \neq \mathbf{R}'_{u,\ell}, & u=u_0 \land \ell = \ell_0        \\
			\mathbf{R}_{u,\ell} =  \mathbf{R}'_{u,\ell},   & u \neq u_0 \lor \ell \neq \ell_0
		\end{cases}
	$$
	In the later situation, there exists two pairs $(u_1, \ell_1)$ and $(u_1, \ell_2)$ with $1 \leq u_1 \leq n$ and $1 \leq \ell_1 < \ell_2 \leq L$ such that
	$$
		\begin{cases}
			\mathbf{R}_{u,\ell} = \mathbf{R}'_{u,\ell'}, & u=u_1 \land \ell = \ell_1 \land \ell' = \ell_2            \\
			\mathbf{R}_{u,\ell} = \mathbf{R}'_{u,\ell'}, & u=u_1 \land \ell = \ell_2 \land \ell' = \ell_1            \\
			\mathbf{R}_{u,\ell} =  \mathbf{R}'_{u,\ell}, & u \neq u_1 \lor (\ell \neq \ell_1 \land \ell \neq \ell_2)
		\end{cases}
	$$
\end{definition}

\begin{definition}[Sequential Differential Privacy, SDP]
	\label{def:sdp}
	For any pair of neighbouring sequential data matrices, $\mathbf{R}$ and $\mathbf{R}'$, and any output $\mathbf{A}$, a random mechanism $\mathcal{M}$ can provide $\epsilon$-SDP (i.e., the random mechanism $\mathcal{M}$ is an $\epsilon$-SDP random mechanism) if we have:
	$$
		\forall \mathbf{A}\in Range(\mathcal{M}(\cdot)), \frac{\Pr[\mathcal{M}(\mathbf{R})=\mathbf{A}]}{\Pr[\mathcal{M}(\mathbf{R}')=\mathbf{A}]} \leq \exp(\epsilon)
	$$
\end{definition}

The two types of changes in the defined neighbouring sequential data matrices have significant implications for SDP, as they respectively highlight the random mechanism's attention to ID and sequential information.
In particular, the former change demonstrates that SDP is sensitive to the values of item IDs, while the latter shows that SDP also pays attention to the sequential information of the users' historical interaction sequences.

\section{PriCDSR}
\label{sec:method}
In this section, we describe the steps of our PriCDSR in detail.
Under the definition of sequential data matrix in Definition \ref{def:sdm}, there is a sequential data matrix $\mathbf{R}^T \in \mathbb{N}^{n \times L}$ in the target domain, and a sequential data matrix $\mathbf{R}^A \in \mathbb{N}^{n \times L}$ in the auxiliary domain.

The overall architecture of our PriCDSR is shown in Fig. \ref{fig:pricdsr}.
Inspired by PriCDR~\cite{PriCDR}, our PriCDSR also adopts a two-stage solution.
In the first stage, our PriCDSR adds a certain level of noise to the auxiliary domain's sequential data matrix, i.e., $\mathbf{R}^A$, using our proposed random mechanism $\mathcal{M}$ to produce a perturbed sequential data matrix, i.e., $\tilde{\mathbf{R}}^A$.
This perturbed matrix is then transmitted to the target domain.
In the second stage, the target domain can take advantage of any existing SFCDSR algorithm, with the perturbed sequential data matrix serving as the input for the algorithm in place of the original sequential data matrix.
By doing this, the knowledge transfer from the auxiliary domain to the target domain can be achieved.
As shown in Fig. \ref{fig:pricdsr}, our PriCDSR is intuitively non-invasive, as it requires no modification to the base model of CDSR.
Therefore, our PriCDSR can be easily extended to the case of one target domain and multiple auxiliary domains.
The only requirement is to apply the SDP to the data from each auxiliary domain and transfer the resulting output to the target domain.
Note that our PriCDSR can only be applied to SFCDSR, not MFCDSR, because the algorithm of MFCDSR uses timestamps to merge the interaction records from the auxiliary domain and the target domain.
However, our PriCDSR does not consider the privacy protection of timestamps.

\begin{figure}[ht]
	\centering
	\includegraphics[width=\linewidth]{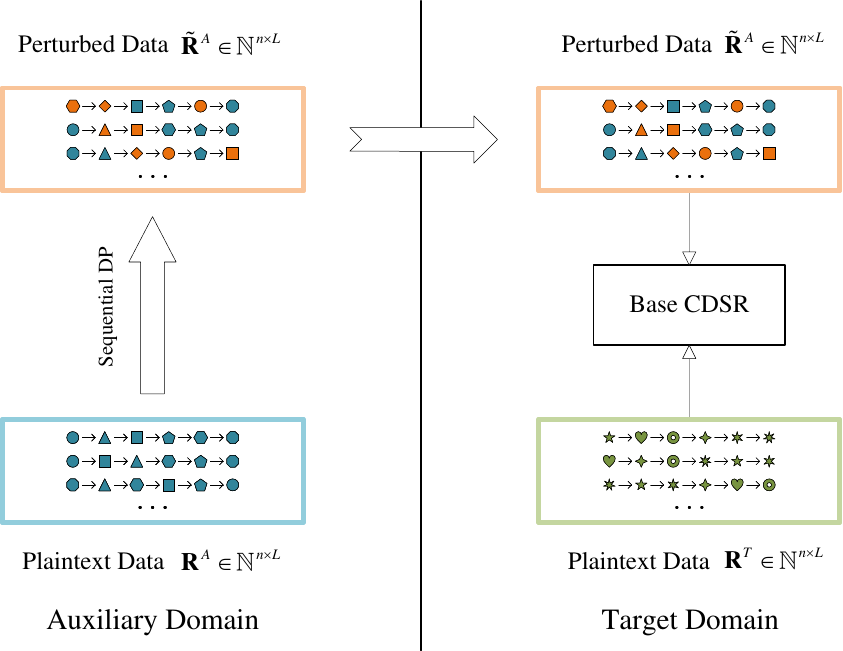}
	\caption{{Overview of our privacy-preserving cross-domain sequential recommender system (PriCDSR).
	The plaintext data of the auxiliary domain is represented as a sequential data matrix $\mathbf{R}^A$ (the blue box in the bottom left corner), which consists of the users' historical interaction sequences.
		$\mathbf{R}^A$ is added with noise to generate the perturbed sequential data matrix $\tilde{\mathbf{R}}^A$ (the orange box in the upper left corner).
		We incorporate these noises (the orange icons in the orange boxes) using the proposed random mechanism $\mathcal{M}$ (see Fig. \ref{alg}) satisfying Sequential DP (see Definition \ref{def:sdp}).
		Subsequently, the perturbed sequential data matrix of the auxiliary domain is transmitted to the target domain (the orange box in the upper right corner).
		The target domain can use any CDSR method as the base model, and legally use the perturbed sequential data matrix of the auxiliary domain $\tilde{\mathbf{R}}^A$ and the plaintext sequential data matrix of the target domain $\mathbf{R}^T$ (the green box in the bottom right corner) to provide recommendation services.}}
	\label{fig:pricdsr}
\end{figure}

Before introducing our random mechanism $\mathcal{M}$, we briefly introduce the randomized response (RR) algorithm~\cite{RR}.
Given a real item ID $v$, RR will output $i$ with the following probability:
$$
	\Pr[i|v] = \begin{cases}
		\frac{\exp(\epsilon)}{\exp(\epsilon)+m-1}, & i=v      \\
		\frac{1}{\exp(\epsilon)+m-1},              & i \neq v
	\end{cases}
$$
where $m$ is the number of all possible outputs.
And then, $i$ will be used as the noised data of $v$.

Our random mechanism $\mathcal{M}$ builds on the RR algorithm and improves upon it, which is shown in Fig. \ref{alg}.
Specifically, for each element $\mathbf{R}_{u,\ell}$ in the sequential data matrix, our random mechanism $\mathcal{M}$ first samples a candidate item ID $i$ using the probability distribution defined in (\ref{eq:pro}).
Note that our random mechanism $\mathcal{M}$ is designed to avoid sampling item IDs that has been sampled, i.e., the item IDs in $\mathcal{P}_{u,\ell}$, because an item ID can only appear once in a user's historical interaction sequence (please see Definition \ref{def:sdm}).
Subsequently, our random mechanism $\mathcal{M}$ check whether the item ID $i$ appears in the user's subsequent interacted item IDs.
If so, our random mechanism $\mathcal{M}$ swaps the subsequent item ID $i$ with the current item ID.
Otherwise, our random mechanism $\mathcal{M}$ replaces the current item ID with item ID $i$.

\begin{equation}
	\label{eq:pro}
	\begin{split}
		&\Pr[i|\mathcal{P}_{u,\ell},\mathbf{R}_{u,\ell}]\\
		=& \begin{cases}
			\frac{\exp(\epsilon)}{\exp(\epsilon)+m-|\mathcal{P}_{u,\ell}\backslash\{0\}|}, & i=\mathbf{R}_{u,\ell}                                                         \\
			\frac{1}{\exp(\epsilon)+m-|\mathcal{P}_{u,\ell}\backslash\{0\}|},              & i \neq \mathbf{R}_{u,\ell} \land i \notin \mathcal{P}_{u,\ell}\backslash\{0\} \\
			0,                                                                             & i \in \mathcal{P}_{u,\ell}\backslash\{0\}
		\end{cases}
	\end{split}
\end{equation}
where $\mathcal{P}_{u,\ell}=\{\tilde{\mathbf{R}}_{u,1}, \tilde{\mathbf{R}}_{u,2}, \ldots, \tilde{\mathbf{R}}_{u,\ell-1}\}$ and $i\in\{0,1,2,\ldots,m\}$.

\begin{figure}[h]
	\caption{Our random algorithm $\mathcal{M}$}
	\label{alg}
	\begin{algorithmic}[1]
		\FOR{$u\in\{1,2,\ldots,n\}$}
		\FOR{$\ell \in \{1,2,\ldots,L\}$}
		\STATE Pick a random item ID $i\in\{0,1,2,\ldots,m\}$ with probability in (\ref{eq:pro}).
		\IF{$i \neq \mathbf{R}_{u,\ell}$}
		\IF{$i \neq 0 \land i \in \{\mathbf{R}_{u,\ell+1}, \mathbf{R}_{u,\ell+2}, \ldots, \mathbf{R}_{u,L}\}$}
		\STATE Swap $\mathbf{R}_{u,\ell}$ and $\mathbf{R}_{u,\ell'}$, where $\mathbf{R}_{u,\ell'}=i \land \mathbf{R}_{u,\ell'} \in \{\mathbf{R}_{u,\ell+1}, \mathbf{R}_{u,\ell+2}, \ldots, \mathbf{R}_{u,L}\}$.
		\ELSE
		\STATE $\mathbf{R}_{u,\ell}\leftarrow i$.
		\ENDIF
		\ENDIF
		\ENDFOR
		\ENDFOR
	\end{algorithmic}
\end{figure}

Inspired by RR, we use a similar probability distribution (i.e., (\ref{eq:pro})) to sample item IDs, ensuring that our random mechanism $\mathcal{M}$ can be proven to satisfy $\epsilon$-SDP.
However, our random mechanism $\mathcal{M}$ is significantly different from RR.
It has the item IDs that cannot be sampled and the swap operations after getting the sampled item IDs.
Both of these differences are designed for sequential data.

Compared with PPGenCDR~\cite{PPGenCDR} and the work~\cite{eggs}, our random mechanism $\mathcal{M}$ provides theoretical privacy protection of the ID and order information.
Compared with PriCDR~\cite{PriCDR}, our random mechanism $\mathcal{M}$ provides privacy protection of sequential information.
Specifically, the JLT and SJLT used in PriCDR can only be applied to the rating matrix without sequential information.

Neither CLDP~\cite{CLDP} nor TLDP~\cite{TLDP} is a recommender system.
If they are applied to cross-domain sequential recommendation, the former can only protect the ID information, while the latter can only protect the sequential information.
Compared with them, our random mechanism $\mathcal{M}$ protects both the ID information and sequential information.

Although our random mechanism $\mathcal{M}$ seems simple, it does satisfy $\epsilon$-SDP, which will be proven in Section \ref{sec:priAna}.
Another simpler and seemingly effective way is to perturb the item IDs followed by random shuffling.
However, it is difficult to prove that this simpler method satisfies $\epsilon$-SDP.
The theoretical privacy guarantee firmly protects the privacy of the users, which may make the users more willing to participate in the training of recommendation models and meet the requirements of data protection laws and regulations.

The computational complexity of our random mechanism $\mathcal{M}$ is $O(n \times L)$.
It can be easily seen from Fig. \ref{alg} that the random mechanism $\mathcal{M}$ consists of two layers of cyclic structure.
The first layer loops $n$ times, and the second layer loops $L$ times.
As for the judgment in the 5th line, using a hash table can reduce its computational complexity to $O(1)$.
Specifically, before perturbing a user's interaction sequence (i.e., the second loop), traverse the interaction sequence to obtain a hash table whose keys are the item IDs and whose values are the positions of the item IDs in the sequence.
By doing so, the computational complexity of the entire algorithm can be reduced to $O(n \times L)$.
And this only requires an additional $O(L)$ spatial complexity.

An important feature of our mechanism is that the perturbed sequential data matrix generated by our random mechanism $\mathcal{M}$ has the same shape as the original sequential data matrix, i.e., $n \times L$.
This ensures the non-invasiveness of our PriCDSR.
Therefore, any existing SFCDSR algorithm can be applied to the perturbed sequential data matrix, thereby enabling the target domain to benefit from the transferred knowledge.

Now, we provide a detailed explanation of how our random mechanism $\mathcal{M}$ operates using a toy example, as shown in Fig. \ref{fig:mechanism}.
we consider a user's historical interaction sequence consisting of five items, i.e., $1 \rightarrow 2 \rightarrow 3 \rightarrow 4 \rightarrow 5$.
To conform to the fixed interaction sequence length, which is $8$ in our toy example, we zero-pad the sequence.
Our random mechanism $\mathcal{M}$ then perturbs each item ID (including the padded zeros) in the user's interaction sequence one by one.
Note that we perturb the padded zeros so that our random mechanism can satisfy $\epsilon$-SDP, although this will introduce more noise.
We explain the process below.
Firstly, for the first $0$, our random mechanism $\mathcal{M}$ samples $0$, so the sequence does not change.
Note that we do not illustrate this sampling in Fig. \ref{fig:mechanism} due to space limitation.
Secondly, for the second $0$, our random mechanism $\mathcal{M}$ samples item ID $8$, which is an item ID that does not appear in the original interaction sequence (we highlight this kind of item ID in orange in Fig. \ref{fig:mechanism}), so the item ID $8$ replaces the original $0$.
The result of this type of perturbation is similar to inserting a new item ID in the original interaction sequence.
Thirdly, for the third $0$, our random mechanism $\mathcal{M}$ samples item ID $1$, which is an item ID appears in the original interaction sequence (we highlight this kind of item ID in blue in Fig. \ref{fig:mechanism}).
Therefore, this $0$ and the item ID $1$ in the original sequence swap places.
Fourthly, for the swapped $0$ and item ID $2$, our random mechanism $\mathcal{M}$ samples $0$ and item ID $2$ respectively, so the sequence remains the same.
We also do not illustrate these two samplings in Fig. \ref{fig:mechanism}.
Fifthly, for the item ID $3$, our random mechanism $\mathcal{M}$ samples item ID $5$, and then swaps their positions in the interaction sequence.
Sixthly, for the item ID $4$, our random mechanism $\mathcal{M}$ samples $0$, so item ID $4$ is replaced with $0$.
The result of this type of perturbation is similar to deleting an item ID in the original interaction sequence.
Finally, for the swapped item ID $3$, our random mechanism $\mathcal{M}$ samples item ID $6$, an item ID that does not appear in the original sequence, so item ID $3$ is replaced with item ID $6$.
After all these perturbations, our random mechanism $\mathcal{M}$ transforms the original interaction sequence $0 \rightarrow 0 \rightarrow 0 \rightarrow 1 \rightarrow 2 \rightarrow 3 \rightarrow 4 \rightarrow 5$ into a perturbed sequence $0 \rightarrow 8 \rightarrow 1 \rightarrow 0 \rightarrow 2 \rightarrow 5 \rightarrow 0 \rightarrow 6$.

Note that this example is just an illustration of all the possible perturbations our random mechanism $\mathcal{M}$ can make to an interaction sequence.
Therefore, the perturbed interaction sequence is quite different from the original sequence.
This may not be the case in real-world scenarios.
The strength of our random mechanism $\mathcal{M}$ to perturb an interaction sequence is governed by the privacy budget $\epsilon$.

\begin{figure}[ht]
	\centering
	\includegraphics[width=\linewidth]{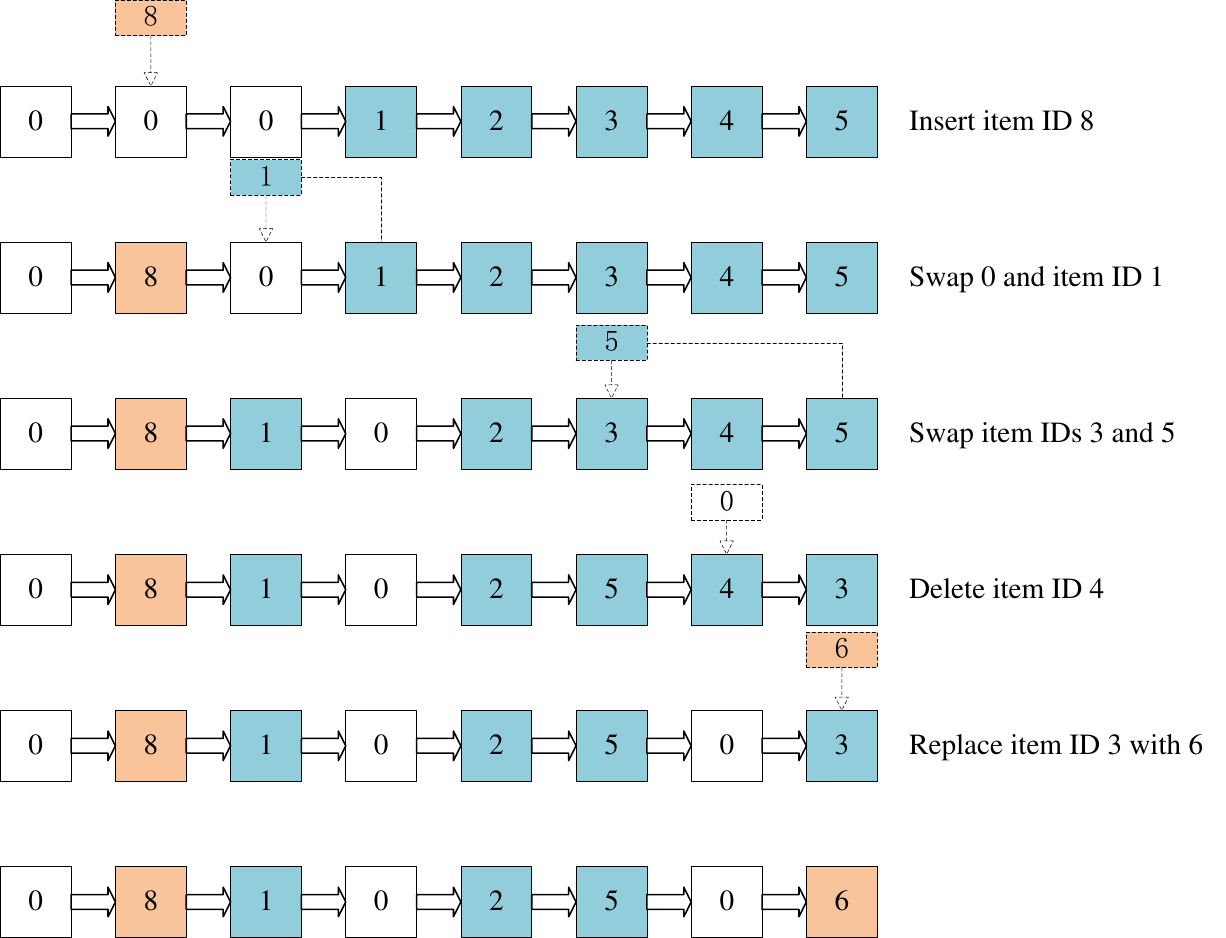}
	\caption{Illustration of an example of our random mechanism $\mathcal{M}$.
		From top to bottom, each interaction sequence shows each step of how our random mechanism $\mathcal{M}$ adds noise to the example interaction sequence.
		The item ID sampled at each step is represented by a dotted box (the dotted arrow extending from it points to the current item ID (or padded zero)).
		If the sampled item ID appears in the subsequent subsequence (connected by a dotted line), the current item ID (or padded zero) and the subsequent item ID will be swapped.
		Otherwise, the current item ID is replaced with the sampled item ID.
		Based on the sampled item ID, the current item ID, and whether the sampled item ID appears in the subsequent subsequence, the added noise can be classified into four categories, i.e., insertion, deletion, replacement, and swap.
		The specific operation of each step is explained on the right side of the corresponding sequence.}
	\label{fig:mechanism}
\end{figure}

\section{Privacy Analysis}
\label{sec:priAna}
In this section, we prove that our random mechanism $\mathcal{M}$ satisfies $\epsilon$-SDP, as shown in Theorem \ref{the}.
Note that the random mechanism $\mathcal{M}$ satisfying SDP is only applied to the data in the auxiliary domain.
Therefore, we will describe the details of our random mechanism $\mathcal{M}$ using the auxiliary domain data as input.
To keep the subsequent description concise, we will use $\mathcal{I}$, $m$, $\mathbf{R}$ and $\tilde{\mathbf{R}}$ to refer to $\mathcal{I}^A$, $m_A$, $\mathbf{R}^A$ and $\tilde{\mathbf{R}}^A$, respectively, unless stated otherwise.

\begin{theorem}
	\label{the}
	Our random algorithm $\mathcal{M}$ guarantees $\epsilon$-SDP.
\end{theorem}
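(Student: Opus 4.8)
The plan is to exploit the fact that $\mathcal{M}$ (Fig.~\ref{alg}) treats each user's row independently: the outer loop over $u$ never lets one row influence another. Since any pair of neighbouring matrices from Definition~\ref{def:nsdm} differs in exactly one row $u_0$ (whether the change is a single-entry modification or a swap of two entries of that row), the joint output probability factorizes over rows, and every factor except the $u_0$-th is identical for $\mathbf{R}$ and $\mathbf{R}'$. Thus it suffices to bound the ratio $\Pr[\mathcal{M}(\mathbf{R}_{u_0})=\mathbf{A}_{u_0}]/\Pr[\mathcal{M}(\mathbf{R}'_{u_0})=\mathbf{A}_{u_0}]$ for a single row, and verifying Definition~\ref{def:sdp} reduces to this one-row statement.

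First I would show that, for a single row, the output together with the input deterministically fixes the entire internal trajectory of $\mathcal{M}$. Processing position $\ell$ always writes the freshly sampled ID into slot $\ell$ (both the swap and the replace branch leave the sampled value at position $\ell$), and no later step ever revisits a finalized position; hence $\tilde{\mathbf{R}}_{u_0,\ell}$ equals the value sampled at step $\ell$, and, given the working array just before step $\ell$, the sampled value $a_\ell := \tilde{\mathbf{R}}_{u_0,\ell}$ uniquely determines whether a hit, a swap, or a replace occurred. Consequently the probability is a single product $\prod_{\ell=1}^L \Pr[a_\ell\mid\mathcal{P}_{u_0,\ell},c_\ell]$ with no hidden summation, where $c_\ell$ denotes the current (working) value at slot $\ell$. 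I would also record that the support of $\mathcal{M}(\cdot)$ is exactly the set of rows with distinct nonzero entries, independently of the input, so the ratio is always well defined and no admissible output makes the denominator vanish.

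The key simplification is that in (\ref{eq:pro}) the denominator $\exp(\epsilon)+m-|\mathcal{P}_{u_0,\ell}\setminus\{0\}|$ depends only on the output prefix $a_1,\dots,a_{\ell-1}$, hence is identical across the two runs and cancels in the ratio. What remains is the numerator, which equals $\exp(\epsilon)$ exactly at the ``hit'' positions ($a_\ell=c_\ell$) and equals $1$ otherwise. Writing $H$ and $H'$ for the number of hits produced by $\mathbf{R}_{u_0}$ and $\mathbf{R}'_{u_0}$ on the common output $\mathbf{A}_{u_0}$, the whole ratio collapses to $\exp\bigl(\epsilon(H-H')\bigr)$, so Theorem~\ref{the} becomes equivalent to the purely combinatorial claim that $H-H'\le 1$ for every neighbouring pair and every output.

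The main obstacle is establishing this last claim, because the swap branch lets a single altered input entry ripple through several downstream slots: changing $\mathbf{R}_{u_0,\ell_0}$ (or transposing two entries) can change which value is pulled forward at later positions and thereby flip the hit/miss status at more than one slot. I would control this by coupling the two runs on the shared output: show that their working arrays evolve identically up to the first slot where a distinguishing value (the modified entry $x$ versus $y$, or the two transposed values) is either sampled or encountered as the current value, and that from that slot onward the two trajectories differ only along a single ``chain'' of pushed-back values, so that the two working arrays re-synchronize afterwards. A careful case analysis of the single-entry change and of the transposition case, tracking where the distinguished values sit in each working array as processing proceeds, should show that this chain alters the running hit count by at most one in net, yielding $|H-H'|\le 1$ and hence $\exp\bigl(\epsilon(H-H')\bigr)\le\exp(\epsilon)$, which completes the proof.
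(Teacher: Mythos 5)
Your setup is sound and, once unwound, it is essentially the paper's argument repackaged in a more transparent form. The paper likewise reduces to the single affected row, factorizes the output probability position by position, and uses the fact that $\mathcal{P}_{u,\ell}$ is determined by the output prefix so that the denominators of (\ref{eq:pro}) are common to both runs; your observation that the whole ratio then equals $\exp(\epsilon(H-H'))$, with $H,H'$ the numbers of ``hit'' positions, is a clean closed form that the paper never states explicitly but establishes implicitly by telescoping. The effort in the paper goes entirely into the combinatorial claim you defer: it couples the two runs, shows the working arrays stay equal except at the differing slot(s), and then performs a four-way case split at that slot (Cases 1.1--1.4 for a single-entry change, Cases 2.1--2.4 for a transposition). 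Your ``chain of pushed-back values'' is exactly the recursion of Cases 1.3 and 2.3, where a swap migrates the discrepancy to a later slot; the discrepancy is consumed at most once, either with a $\pm 1$ hit asymmetry (Cases 1.1/1.2 and 2.1/2.2) or with none (Case 1.4, and Case 2.4 where a transposition degenerates to a single-entry difference and is handed back to Case 1).

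The one genuine shortfall is that the proposal stops at ``a careful case analysis \ldots should show'' the invariant $|H-H'|\le 1$; that case analysis is the entire substance of the theorem, so as written this is a correct proof plan rather than a proof. If you carry it out, two details need care. First, padding zeros may repeat within a row, and the swap branch of Fig.~\ref{alg} explicitly requires $i\neq 0$; this, together with the distinctness of nonzero entries, is what guarantees the discrepancy remains confined to a single slot (or a single transposed pair) instead of spreading, and it also allows the discrepancy to be silently resolved when a zero is pushed onto the differing slot. Second, in the transposition case the discrepancy occupies two slots $\ell_1<\ell_2$ at once, so you must check that sampling a value held at a third slot moves both runs' discrepancies coherently (both swap with the same slot, preserving the transposition pattern), while sampling a fresh value at $\ell_1$ collapses the pair to a single-entry discrepancy at $\ell_2$. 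With those checks the invariant holds and $\exp(\epsilon(H-H'))\le\exp(\epsilon)$ follows, matching the paper's conclusion.
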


\begin{proof}
	According to Definition \ref{def:nsdm}, we divide the proof into Case 1 and Case 2.
	Case 1 deals with the situation where the neighbouring sequential data matrices differ from the value of a single element, while Case 2 considers the scenario where they differ from the position of a pair of elements.

	\textbf{Case 1}.
	For two neighbouring sequential data matrices $\mathbf{R}$ and $\mathbf{R}'$, which differ only by one user-item interaction, and any output $\mathbf{A}$, we expand $\frac{\Pr[\mathcal{M}(\mathbf{R})=\mathbf{A}]}{\Pr[\mathcal{M}(\mathbf{R}')=\mathbf{A}]}$ using the conditional probability formula as follows.
	\begin{equation}
		\label{eq:case1}
		\begin{split}
			\frac{\Pr[\mathcal{M}(\mathbf{R})=\mathbf{A}]}{\Pr[\mathcal{M}(\mathbf{R}')=\mathbf{A}]}
			=    & \frac{\prod_{u=1}^n\prod_{\ell=1}^L\Pr[\tilde{\mathbf{R}}_{u,\ell}=\mathbf{A}_{u,\ell}|\mathcal{P}_{u,\ell}]}
			{\prod_{u=1}^n\prod_{\ell=1}^L\Pr[\tilde{\mathbf{R}}'_{u,\ell}=\mathbf{A}_{u,\ell}|\mathcal{P}_{u,\ell}]}                   \\
			=    & \frac{\prod_{\ell=1}^{\ell_0-1}\Pr[\tilde{\mathbf{R}}_{u_0,\ell}=\mathbf{A}_{u_0,\ell}|\mathcal{P}_{u_0,\ell}]}
			{\prod_{\ell=1}^{\ell_0-1}\Pr[\tilde{\mathbf{R}}'_{u_0,\ell}=\mathbf{A}_{u_0,\ell}|\mathcal{P}_{u_0,\ell}]}\\
			*    & \frac{\Pr[\tilde{\mathbf{R}}_{u_0,\ell_0}=\mathbf{A}_{u_0,\ell_0}|\mathcal{P}_{u_0,\ell_0}]}
			{\Pr[\tilde{\mathbf{R}}'_{u_0,\ell_0}=\mathbf{A}_{u_0,\ell_0}|\mathcal{P}_{u_0,\ell_0}]}\\
			*    & \frac{\prod_{\ell=\ell_0+1}^{L}\Pr[\tilde{\mathbf{R}}_{u_0,\ell}=\mathbf{A}_{u_0,\ell}|\mathcal{P}_{u_0,\ell}]}
			{\prod_{\ell=\ell_0+1}^{L}\Pr[\tilde{\mathbf{R}}'_{u_0,\ell}=\mathbf{A}_{u_0,\ell}|\mathcal{P}_{u_0,\ell}]}\\
		\end{split}
	\end{equation}
	For any $\mathbf{A}_{u_0,\ell}, 1 \leq \ell < \ell_0$, $\Pr[\tilde{\mathbf{R}}_{u_0,\ell}=\mathbf{A}_{u_0,\ell}|\mathcal{P}_{u_0,\ell}]$ and $\Pr[\tilde{\mathbf{R}}'_{u_0,\ell}=\mathbf{A}_{u_0,\ell}|\mathcal{P}_{u_0,\ell}]$ are obviously equal no matter what $\mathbf{A}_{u_0,\ell}$ is.
	During the process of perturbing $\mathbf{R}_{u_0,\ell}$ and $\mathbf{R}'_{u_0,\ell}$, $1 \leq \ell < \ell_0$, it is possible that the item IDs of subsequent elements may also be perturbed, because our random mechanism $\mathcal{M}$ may sample an item ID that exists in the subsequent elements and swap its position with the current one.
	However, this is acceptable because it does not change the equality of each element of $\mathbf{R}_{u_0,\ell}$ and $\mathbf{R}'_{u_0,\ell}$, $\ell_0 \leq \ell \leq L$.
	Specifically, for $\ell=\ell_0$, no matter what the currently sampled item ID is, it can only be swapped with $\mathbf{R}_{u_0,\ell_0}$ or $\mathbf{R}'_{u_0,\ell_0}$, and the new $\mathbf{R}_{u_0,\ell_0}$ or $\mathbf{R}'_{u_0,\ell_0}$ cannot be equal to the $\mathbf{R}'_{u_0,\ell_0}$ or $\mathbf{R}_{u_0,\ell_0}$ that has not been swapped. 
	Therefore, $\mathbf{R}_{u_0,\ell_0}$ and $\mathbf{R}'_{u_0,\ell_0}$ remain unequal after the possible swap.
	For $\ell_0 < \ell \leq L$, $\mathbf{R}_{u_0,\ell}$ and $\mathbf{R}'_{u_0,\ell}$ will be swapped together or not because they are equal. 
	Hence, $\mathbf{R}_{u_0,\ell}$ and $\mathbf{R}'_{u_0,\ell}$, $\ell_0 < \ell \leq L$, remain equal after the possible swap.
	Therefore, Equation (\ref{eq:case1}) can be rewritten as follows.
	\begin{equation}
		\label{eq:case1-simple}
		\begin{split}
			\frac{\Pr[\mathcal{M}(\mathbf{R})=\mathbf{A}]}{\Pr[\mathcal{M}(\mathbf{R}')=\mathbf{A}]}
			=    & \frac{\Pr[\tilde{\mathbf{R}}_{u_0,\ell_0}=\mathbf{A}_{u_0,\ell_0}|\mathcal{P}_{u_0,\ell_0}]}
			{\Pr[\tilde{\mathbf{R}}'_{u_0,\ell_0}=\mathbf{A}_{u_0,\ell_0}|\mathcal{P}_{u_0,\ell_0}]} \\
			*    & \frac{\prod_{\ell=\ell_0+1}^{L}\Pr[\tilde{\mathbf{R}}_{u_0,\ell}=\mathbf{A}_{u_0,\ell}|\mathcal{P}_{u_0,\ell}]}
			{\prod_{\ell=\ell_0+1}^{L}\Pr[\tilde{\mathbf{R}}'_{u_0,\ell}=\mathbf{A}_{u_0,\ell}|\mathcal{P}_{u_0,\ell}]} \\
		\end{split}
	\end{equation}

	For this equation, we consider the following four cases.

	\textbf{Case 1.1}:
	If the candidate item ID of our random mechanism $\mathcal{M}$ to $\mathbf{R}_{u_0,\ell_0}$ and $\mathbf{R}'_{u_0,\ell_0}$ is $\mathbf{R}_{u_0,\ell_0}$, i.e., $\mathbf{A}_{u_0,\ell_0} = \mathbf{R}_{u_0,\ell_0}$, then the random mechanism $\mathcal{M}$ will assign $\mathbf{R}_{u_0,\ell_0}$ to $\mathbf{R}'_{u_0,\ell_0}$.
	After this operation, the subsequent elements, i.e., $\mathbf{R}_{u_0,\ell}$ and $\mathbf{R}'_{u_0,\ell}, \ell_0 < \ell \leq L$, remains the same, and are still point-wise equal.
	Hence, $\Pr[\tilde{\mathbf{R}}_{u_0,\ell}=\mathbf{A}_{u_0,\ell}|\mathcal{P}_{u_0,\ell}]$ and $\Pr[\tilde{\mathbf{R}}'_{u_0,\ell}=\mathbf{A}_{u_0,\ell}|\mathcal{P}_{u_0,\ell}]$ are equal no matter what $\mathbf{A}_{u_0,\ell}, \ell_0+1 \leq \ell \leq L$ is. Therefore, Equation (\ref{eq:case1-simple}) can be calculated as follows.
	\begin{equation}
		\frac{\Pr[\mathcal{M}(\mathbf{R})=\mathbf{A}]}{\Pr[\mathcal{M}(\mathbf{R}')=\mathbf{A}]}
		=  \frac{\frac{\exp(\epsilon)}{\exp(\epsilon)+m-|\mathcal{P}_{u_0,\ell_0}\backslash\{0\}|}}{\frac{1}{\exp(\epsilon)+m-|\mathcal{P}_{u_0,\ell_0}\backslash\{0\}|}} * 1  = \exp(\epsilon)
	\end{equation}

	\textbf{Case 1.2}:
	If the candidate item ID of our random mechanism $\mathcal{M}$ to $\mathbf{R}_{u_0,\ell_0}$ and $\mathbf{R}'_{u_0,\ell_0}$ is $\mathbf{R}'_{u_0,\ell_0}$, i.e., $\mathbf{A}_{u_0,\ell_0} = \mathbf{R}'_{u_0,\ell_0}$.
	Hence, Equation (\ref{eq:case1-simple}) can be calculated as follows.
	\begin{equation}
		\frac{\Pr[\mathcal{M}(\mathbf{R})=\mathbf{A}]}{\Pr[\mathcal{M}(\mathbf{R}')=\mathbf{A}]}
		=  \frac{\frac{1}{\exp(\epsilon)+m-|\mathcal{P}_{u_0,\ell_0}\backslash\{0\}|}}{\frac{\exp(\epsilon)}{\exp(\epsilon)+m-|\mathcal{P}_{u_0,\ell_0}\backslash\{0\}|}} * 1  =  \frac{1}{\exp(\epsilon)}
	\end{equation}
	\textbf{Case 1.3}: If the candidate item ID of our random mechanism $\mathcal{M}$ to $\mathbf{R}_{u_0,\ell_0}$ and $\mathbf{R}'_{u_0,\ell_0}$ is an item ID that exists in the subsequent elements, i.e., $\mathbf{A}_{u_0,\ell_0} \neq \mathbf{R}_{u_0,\ell_0} \land \mathbf{A}_{u_0,\ell_0} \neq \mathbf{R}'_{u_0,\ell_0} \land \mathbf{A}_{u_0,\ell_0} \in \{\mathbf{R}_{u_0,\ell_0+1}, \mathbf{R}_{u_0,\ell_0+2}, \ldots, \mathbf{R}_{u_0,L}\}$, then our random mechanism $\mathcal{M}$ will swap $\mathbf{R}_{u_0,\ell_0}$ with the sampled item ID.
	In this case, $\frac{\Pr[\tilde{\mathbf{R}}_{u_0,\ell_0}=\mathbf{A}_{u_0,\ell_0}|\mathcal{P}_{u_0,\ell_0}]}{\Pr[\tilde{\mathbf{R}}'_{u_0,\ell_0}=\mathbf{A}_{u_0,\ell_0}|\mathcal{P}_{u_0,\ell_0}]} = 1$, and $\mathbf{R}_{u_0,\ell}$ and $\mathbf{R}'_{u_0,\ell}, \ell_0 < \ell \leq L$ become two new neighbouring sequential data matrices that differ from the value of a single element, i.e., back to Case 1.

	\textbf{Case 1.4}: If the candidate item ID of our random mechanism $\mathcal{M}$ to $\mathbf{R}_{u_0,\ell_0}$ and $\mathbf{R}'_{u_0,\ell_0}$ is a new item ID, i.e., $\mathbf{A}_{u_0,\ell_0} \neq \mathbf{R}_{u_0,\ell_0} \land \mathbf{A}_{u_0,\ell_0} \neq \mathbf{R}'_{u_0,\ell_0} \land \mathbf{A}_{u_0,\ell_0} \notin \{\mathbf{R}_{u_0,\ell_0+1}, \mathbf{R}_{u_0,\ell_0+1}, \ldots, \mathbf{R}_{u_0,L}\}$, then our random mechanism $\mathcal{M}$ will assign the new item ID to $\mathbf{R}_{u_0,\ell_0}$ and $\mathbf{R}'_{u_0,\ell_0}$.
	In this case, $\frac{\Pr[\tilde{\mathbf{R}}_{u_0,\ell_0}=\mathbf{A}_{u_0,\ell_0}|\mathcal{P}_{u_0,\ell_0}]}{\Pr[\tilde{\mathbf{R}}'_{u_0,\ell_0}=\mathbf{A}_{u_0,\ell_0}|\mathcal{P}_{u_0,\ell_0}]} = 1$, and $\mathbf{R}_{u_0,\ell}$ and $\mathbf{R}'_{u_0,\ell}, \ell_0 < \ell \leq L$ have not been changed so they are still point-wise equal.
	Therefore, Equation (\ref{eq:case1-simple}) can be calculated as follows.
	\begin{equation}
		\frac{\Pr[\mathcal{M}(\mathbf{R})=\mathbf{A}]}{\Pr[\mathcal{M}(\mathbf{R}')=\mathbf{A}]}
		=  1 * 1     =1
	\end{equation}

	To sum up, in Case 1, $\frac{\Pr[\mathcal{M}(\mathbf{R})=\mathbf{A}]}{\Pr[\mathcal{M}(\mathbf{R}')=\mathbf{A}]}\leq \exp(\epsilon)$.

	\textbf{Case 2}.
	For two neighbouring sequential data matrices $\mathbf{R}$ and $\mathbf{R}'$ in which only the positions of a pair of elements have been swapped, and any output $\mathbf{A}$,
	\begin{equation}
		\label{eq:case2}
		\begin{split}
			\frac{\Pr[\mathcal{M}(\mathbf{R})=\mathbf{A}]}{\Pr[\mathcal{M}(\mathbf{R}')=\mathbf{A}]}
			= & \frac{\prod_{u=1}^n\prod_{\ell=1}^L\Pr[\tilde{\mathbf{R}}_{u,\ell}=\mathbf{A}_{u,\ell}|\mathcal{P}_{u,\ell}]}
			{\prod_{u=1}^n\prod_{\ell=1}^L\Pr[\tilde{\mathbf{R}}'_{u,\ell}=\mathbf{A}_{u,\ell}|\mathcal{P}_{u_1,\ell}]}\\
			= & \frac{\prod_{\ell=1}^{\ell_1-1}\Pr[\tilde{\mathbf{R}}_{u_1,\ell}=\mathbf{A}_{u_1,\ell}|\mathcal{P}_{u_1,\ell}]}
			{\prod_{\ell=1}^{\ell_1-1}\Pr[\tilde{\mathbf{R}}'_{u_1,\ell}=\mathbf{A}_{u_1,\ell}|\mathcal{P}_{u_1,\ell}]}\\
			* & \frac{\Pr[\tilde{\mathbf{R}}_{u_1,\ell_1}=\mathbf{A}_{u_1,\ell_1}|\mathcal{P}_{u_1,\ell_1}]}
			{\Pr[\tilde{\mathbf{R}}'_{u_1,\ell_1}=\mathbf{A}_{u_1,\ell_1}|\mathcal{P}_{u_1,\ell_1}]}\\
			* & \frac{\prod_{\ell=\ell_1+1}^{L}\Pr[\tilde{\mathbf{R}}_{u_1,\ell}=\mathbf{A}_{u_1,\ell}|\mathcal{P}_{u_1,\ell}]}
			{\prod_{\ell=\ell_1+1}^{L}\Pr[\tilde{\mathbf{R}}'_{u_1,\ell}=\mathbf{A}_{u_1,\ell}|\mathcal{P}_{u_1,\ell}]}
		\end{split}
	\end{equation}

	For any $1 \leq \ell < \ell_1$, if our random algorithm $\mathcal{M}$ swaps $\mathbf{R}_{u_1,\ell}$ and $\mathbf{R}_{u_1,\ell_1}$ (or $\mathbf{R}_{u_1,\ell}$ and $\mathbf{R}_{u_1,\ell_2}$), then it must swap $\mathbf{R}'_{u_1,\ell}$ and $\mathbf{R}'_{u_1,\ell_2}$ (or $\mathbf{R}'_{u_1,\ell}$ and $\mathbf{R}'_{u_1,\ell_1}$).
	This does not change the equality of each element of $\mathbf{R}_{u_1,\ell}$ and $\mathbf{R}'_{u_1,\ell}$, where $\ell_1 \leq \ell \leq L$.
	Therefore, Equation (\ref{eq:case2}) can be rewritten as follows.
	\begin{equation}
		\label{eq:case2-simple}
		\begin{split}
			\frac{\Pr[\mathcal{M}(\mathbf{R})=\mathbf{A}]}{\Pr[\mathcal{M}(\mathbf{R}')=\mathbf{A}]}
			= & \frac{\Pr[\tilde{\mathbf{R}}_{u_1,\ell_1}=\mathbf{A}_{u_1,\ell_1}|\mathcal{P}_{u_1,\ell_1}]}
			{\Pr[\tilde{\mathbf{R}}'_{u_1,\ell_1}=\mathbf{A}_{u_1,\ell_1}|\mathcal{P}_{u_1,\ell_1}]}\\
			* & \frac{\prod_{\ell=\ell_1+1}^{L}\Pr[\tilde{\mathbf{R}}_{u_1,\ell}=\mathbf{A}_{u_1,\ell}|\mathcal{P}_{u_1,\ell}]}
			{\prod_{\ell=\ell_1+1}^{L}\Pr[\tilde{\mathbf{R}}'_{u_1,\ell}=\mathbf{A}_{u_1,\ell}|\mathcal{P}_{u_1,\ell}]}\\
		\end{split}
	\end{equation}

	For this equation, we consider the following four cases.

	\textbf{Case 2.1}: If the candidate item ID of our random mechanism $\mathcal{M}$ to $\mathbf{R}_{u_1,\ell_1}$ and $\mathbf{R}'_{u_1,\ell_1}$ is $\mathbf{R}_{u_1,\ell_1}$, i.e., $\mathbf{A}_{u_1,\ell_1} = \mathbf{R}_{u_1,\ell_1}=\mathbf{R}'_{u_1,\ell_2}$, then the random mechanism $\mathcal{M}$ will swap $\mathbf{R}'_{u_1,\ell_1}$ and $\mathbf{R}'_{u_1,\ell_2}$.
	After this operation, the subsequent elements in the neighbouring sequential data matrices are point-wise equal, i.e., $\mathbf{R}_{u_1,\ell} = \mathbf{R}'_{u_1,\ell}, \ell_1 < \ell \leq L$.
	Therefore, Equation (\ref{eq:case2-simple}) can be calculated as follows.
	\begin{equation}
		\frac{\Pr[\mathcal{M}(\mathbf{R})=\mathbf{A}]}{\Pr[\mathcal{M}(\mathbf{R}')=\mathbf{A}]}
		=  \frac{\frac{\exp(\epsilon)}{\exp(\epsilon)+m-|\mathcal{P}_{u_1,\ell_1}\backslash\{0\}|}}{\frac{1}{\exp(\epsilon)+m-|\mathcal{P}_{u,\ell_1}\backslash\{0\}|}} * 1  
		=  \exp(\epsilon)
	\end{equation}

	\textbf{Case 2.2}: If the candidate item ID of our random mechanism $\mathcal{M}$ to $\mathbf{R}_{u_1,\ell_1}$ and $\mathbf{R}'_{u_1,\ell_1}$ is $\mathbf{R}'_{u_1,\ell_1}$, i.e., $\mathbf{A}_{u_1,\ell_1} = \mathbf{R}'_{u_1,\ell_1} = \mathbf{R}_{u_1,\ell_2}$, then the random mechanism $\mathcal{M}$ will swap $\mathbf{R}_{u_1,\ell_1}$ and $\mathbf{R}_{u_1,\ell_2}$.
	After this operation, the subsequent elements in the neighbouring sequential data matrices are also point-wise equal, just like Case 2.1.
	Hence, Equation (\ref{eq:case2-simple}) can be calculated as follows.
	\begin{equation}
			\frac{\Pr[\mathcal{M}(\mathbf{R})=\mathbf{A}]}{\Pr[\mathcal{M}(\mathbf{R}')=\mathbf{A}]}
			=  \frac{\frac{1}{\exp(\epsilon)+m-|\mathcal{P}_{u_1,\ell_1}\backslash\{0\}|}}{\frac{\exp(\epsilon)}{\exp(\epsilon)+m-|\mathcal{P}_{u,\ell_1}\backslash\{0\}|}} * 1  
			=  \frac{1}{\exp(\epsilon)}
	\end{equation}

	\textbf{Case 2.3}: If the candidate item ID of our random mechanism $\mathcal{M}$ to $\mathbf{R}_{u_1,\ell_1}$ and $\mathbf{R}'_{u_1,\ell_1}$ is $\mathbf{R}_{u_1,\ell}$, where $\ell_1 < \ell \leq L \land \ell \neq \ell_2$, i.e., $\mathbf{A}_{u_1,\ell_1} \neq \mathbf{R}_{u_1,\ell_1} \land \mathbf{A}_{u_1,\ell_1} \neq \mathbf{R}'_{u_1,\ell_1} \land \mathbf{A}_{u_1,\ell_1} \in \{\mathbf{R}_{u_1,\ell_1+1}, \ldots, \mathbf{R}_{u_1,\ell_2-1}, \mathbf{R}_{u_1,\ell_2+1}, \ldots, \mathbf{R}_{u_1,L}\}$, then the random mechanism $\mathcal{M}$ will swap $\mathbf{R}_{u_1,\ell_1}$ with the item ID $\mathbf{R}_{u_1,\ell}$.
	In this case, $\frac{\Pr[\tilde{\mathbf{R}}_{u_1,\ell_1}=\mathbf{A}_{u_1,\ell_1}|\mathcal{P}_{u_1,\ell_1}]}
		{\Pr[\tilde{\mathbf{R}}'_{u_1,\ell_1}=\mathbf{A}_{u_1,\ell_1}|\mathcal{P}_{u_1,\ell_1}]} = 1$, and $\mathbf{R}_{u_1,\ell}$ and $\mathbf{R}'_{u_1,\ell}, \ell_1 < \ell \leq L$ become two new neighbouring sequential data matrices that differ from the position of a pair of elements, i.e., back to Case 2.

	\textbf{Case 2.4}: If the candidate item ID of our random mechanism $\mathcal{M}$ to $\mathbf{R}_{u_1,\ell_1}$ and $\mathbf{R}'_{u_1,\ell_1}$ is a new item ID, i.e., $\mathbf{A}_{u_1,\ell_1} \neq \mathbf{R}_{u_1,\ell_1} \land \mathbf{A}_{u_1,\ell_1} \neq \mathbf{R}'_{u_1,\ell_1} \land \mathbf{A}_{u_1,\ell_1} \notin \{\mathbf{R}_{u_1,\ell_1+1}, \ldots, \mathbf{R}_{u_1,\ell_2-1}, \mathbf{R}_{u_1,\ell_2+1}, \ldots, \mathbf{R}_{u_1,L}\}$, then our random mechanism $\mathcal{M}$ will assign the new item ID to $\mathbf{R}_{u_1,\ell_1}$ and $\mathbf{R}'_{u_1,\ell_1}$.
	In this case, $\frac{\Pr[\tilde{\mathbf{R}}_{u_1,\ell_1}=\mathbf{A}_{u_1,\ell_1}|\mathcal{P}_{u_1,\ell_1}]}{\Pr[\tilde{\mathbf{R}}'_{u_1,\ell_1}=\mathbf{A}_{u_1,\ell_1}|\mathcal{P}_{u_1,\ell_1}]} = 1$, and $\mathbf{R}_{u_1,\ell}$ and $\mathbf{R}'_{u_1,\ell}, \ell_1 < \ell \leq L$ become two new neighbouring sequential data matrices that differ from the value of a single element, i.e., $\mathbf{R}_{u_1,\ell_2} \neq \mathbf{R}'_{u_1,\ell_2}$.
	Therefore, this case can be regarded as Case 1.
	Furthermore, in Case 1, $\frac{\Pr[\mathcal{M}(\mathbf{R})=\mathbf{A}]}{\Pr[\mathcal{M}(\mathbf{R}')=\mathbf{A}]}\leq \exp(\epsilon)$.
	Hence,
	\begin{equation}
		\frac{\Pr[\mathcal{M}(\mathbf{R})=\mathbf{A}]}{\Pr[\mathcal{M}(\mathbf{R}')=\mathbf{A}]} \leq 1 * \exp(\epsilon) = \exp(\epsilon)
	\end{equation}

	To sum up, in Case 2, $\frac{\Pr[\mathcal{M}(\mathbf{R})=\mathbf{A}]}{\Pr[\mathcal{M}(\mathbf{R}')=\mathbf{A}]}\leq \exp(\epsilon)$.

	Combining Case 1 and Case 2, Theorem \ref{the} is proved.
\end{proof}

\section{Empirical Evaluations}
\label{sec:exp}
In this section, we focus on the following three research questions (RQs) and conduct two corresponding experiments.
RQ1: How does our PriCDSR perform compared to a model trained only on the data of the target domain?
RQ2: How does our PriCDSR perform compared to a model trained on the plaintext data of both the target domain and the auxiliary domain?
RQ3: How does the privacy budget $\epsilon$ affect the performance of our PriCDSR?
\subsection{Data and Evaluation Metrics}

We use real-world data from Amazon~\cite{Amazon}, which is an e-commerce dataset collected by Amazon with item ratings and reviews, etc.
We choose three subsets, Movies and TV (Movie), CDs and Vinyl (CD), and Books (Book), as three domains for experiments.
We follow MGCL~\cite{MGCL} and preprocess these three domains as follows:
1) We assume that the presence of reviews, check-ins, and purchases is positive.
In other words, a user's interaction record with an item implies that the user is interested in the item.
2) We only keep the users and items with at least five interaction records.
3) We only keep the interaction sequences of users who have interaction records in all the three domains.
4) We use timestamps to determine the order of interactions and discard later duplicated interaction, i.e., user-item pairs.
5) We adopt leave-one-out evaluation by splitting each dataset into three parts, i.e., the last interaction of each user for testing, the penultimate one for validation, and the remaining interaction records for training.
The statistics of the preprocessed data are shown in Table \ref{tab:statis}.
We release the source codes and scripts to reproduce all the experimental results at \url{https://github.com/LachlanLin/PriCDSR}.

\begin{table}[htbp]
	\caption{Statistics of the data in three different domains.}
	\label{tab:statis}
    \centering
	\begin{tabular}{lrrr}
		\hline
		Domain & \#Users & \#Items & Avg. Seqlen. \\
		Movie  & 10,929  & 59,513  & 42.11        \\
		CD     & 10,929  & 91,169  & 31.50        \\
		Book   & 10,929  & 236,049 & 55.60        \\
		\hline
	\end{tabular}
\end{table}

Our research problem assumes that there is only one target domain and one auxiliary domain, hence we conduct experiments on all six possible (target domain, auxiliary domain) pairs, i.e., Movie$\leftarrow$Book, Movie$\leftarrow$CD, Book$\leftarrow$Movie, Book$\leftarrow$CD, CD$\leftarrow$Movie, and CD$\leftarrow$Book.

To evaluate the recommendation performance, we follow the literatures~\cite{SASRec, FISSA} and sample negative items based on their popularity.
For each user, we sample 100 items that have not been interacted with as negative items, and then rank these negative items with the ground-truth item.
We choose two commonly used metrics for CDSR, i.e., normalized discounted cumulative gain (NDCG@$k$) and hit ratio (HR@$k$).
Note that $k$ denotes the length of the recommendation list provided by the algorithms.

\subsection{Baselines and Hyperparameter Configurations}

The random mechanism $\mathcal{M}$ acts on the data rather than on the model (as illustrated in Fig. \ref{fig:pricdsr}), which means that our empirical studies shall focus on the data with and without applying our mechanism.
Therefore, we use a CDSR method, i.e., DASL~\cite{DASL}, as our base model and obtain an implementation of our PriCDSR (PriCDSR-DASL).
Note that we use the source code provided by the DASL authors at \url{https://github.com/lpworld/DASL}.
To enable DASL to be applied to our data, we make the following modifications to it.
(i) We modify the loss function.
The data used in the original paper has two kinds of labels, i.e., positive and negative.
Therefore, its loss function is the binary cross-entropy loss defined on the positive items and the negative items.
However, the data in our research problem only consists of one-class feedback.
We thus treat all interaction records as positive and sample negative items to define the binary cross-entropy loss.
(ii) We modify the network structure of the output layer, from the concatenation and fully connected layers in the original paper to the inner product.
Note that we do not simulate the data transfer between the two domains as shown in Fig. \ref{fig:pricdsr}.
Instead, we use our random mechanism $\mathcal{M}$ to add noise to the auxiliary data of each domain couple to obtain the perturbed auxiliary data.
We refer to the DASL using perturbed auxiliary data as our PriCDSR-DASL.
Therefore, the base model of our PriCDSR-DASL, i.e., DASL, remains unchanged and does not involve any modifications.
This implementation detail further demonstrates the non-invasiveness of our PriCDSR.

We adopt DASL without the DE and DA modules (DASL-single) as our baseline for single-domain sequential recommendation.
By removing the DE and DA modules, DASL can only rely on the data from the target domain to capture user interests.
It is important to note that we do not include other single-domain sequential recommendation models, such as SASRec~\cite{SASRec}, as our baselines.
The reason is that we focus on evaluating the extent to which our PriCDSR model affects the accuracy of the base CDSR model.
By comparing our PriCDSR-DASL with DASL-single, which is degenerated from the CDSR method DASL, we can ascertain if the noise introduced by our random mechanism $\mathcal{M}$ fully mitigates the performance gains arising from the auxiliary data.
Hence, comparing with a single-domain recommendation method that is not reduced from the cross-domain recommendation model is not necessary.

We search all the hyperparameters according to the NDCG@10 metric on the validation set of each data.
For DASL and DASL-single, we fix the number of epochs to 1000, the hidden size of the embedding layer to 128, the memory window of the attention layer to 10, the dropout rate of the dual attention layer to 0.5, and use Adam~\cite{Adam} as the optimizer.
We search the best values of the learning rate in $\{0.0001, 0.0005, 0.001, 0.005\}$, and the batch size in $\{128, 256, 512\}$.
As a result, the best learning rate is 0.001 and the best batch size is 128 on each data for both DASL and DASL-single.
For our PriCDSR-DASL, we do not search for hyperparameters on the validation set again, and instead use the same hyperparameters as that of DASL.
In order to ensure that the experimental results are reliable, we will report the average performance on the test set of three runs.

\subsection{Model Comparison (RQ1 and RQ2)}

To answer RQ1 and RQ2, we evaluate DASL, DASL-single and PriCDSR-DASL on each data.
We fix the privacy budget $\epsilon$ to 10 for PriCDSR-DASL.
We report the experimental results in Table \ref{tab:result}, from which we can have the following observations.
(1)
Compared with DASL-single, DASL has better recommendation performance on each data, which has been demonstrated in the original paper of DASL~\cite{DASL}.
This can be attributed to the DE and DA modules, which effectively leverage the auxiliary domain knowledge to enhance the recommendation performance in the target domain.
(2)
Compared with DASL-single, PriCDSR-DASL still has better performance even though it uses the perturbed data of the auxiliary domain.
This shows that our random mechanism $\mathcal{M}$ can still enable the DE and DA modules of DASL to use the data of the auxiliary domain to help the target domain improve the recommendation performance, while protecting the users' privacy.
Note that our PriCDSR does not help improve the recommendation performance. 
Our focus is on the privacy protection.
Thanks to the non-invasiveness of our PriCDSR, it is able to keep using the state-of-the-art model in CDSR as the base model to obtain higher recommendation performance.
This can reduce the cost for enterprises to update and iterate they recommender systems.
(3)
PriCDSR-DASL may perform better or worse than DASL depending on the data.
Specifically, PriCDSR-DASL performs worse than DASL on data Movie$\leftarrow$Book, Movie$\leftarrow$CD, Book$\leftarrow$Movie, and Book$\leftarrow$CD, but performs better on data CD$\leftarrow$Movie and CD$\leftarrow$Book.
The observed improvement or degradation in recommendation performance can be attributed to the noise introduced by our random mechanism $\mathcal{M}$.
On the one hand, the noise may disrupt the knowledge transferred from the auxiliary domain, thereby reducing the benefit to the target domain.
On the other hand, the introduction of a suitable amount of noise to the auxiliary domain data may enhance the robustness of the target domain model and, in turn, lead to improved recommendation performance.

\begin{table*}[htbp]
	\caption{Recommendation performance of the single-domain sequential recommendation (SR) method DASL-single, cross-domain SR (CDSR) method DASL, and our privacy-preserving CDSR method PriCDSR-DASL on each data.
		Note that we mainly focus on comparison of the data with or without using the proposed random mechanism $\mathcal{M}$.
	}
	\label{tab:result}
	\centering
	\begin{tabular}{lllcccc}
		\hline
		Datasets & Algorithms   & Mechanism         & HR@5                     & HR@10                    & NDCG@5                   & NDCG@10                  \\
		\hline
		\multirow{3}{*}{Movie$\leftarrow$Book}
		         & DASL-single  & w/o $\mathcal{M}$ & 0.1754{\tiny$\pm$0.0013} & 0.2701{\tiny$\pm$0.0040} & 0.1164{\tiny$\pm$0.0016} & 0.1468{\tiny$\pm$0.0024} \\
		         & DASL         & w/o $\mathcal{M}$ & 0.1903{\tiny$\pm$0.0029} & 0.2860{\tiny$\pm$0.0042} & 0.1269{\tiny$\pm$0.0025} & 0.1578{\tiny$\pm$0.0029} \\
		         & PriCDSR-DASL & w/ $\mathcal{M}$  & 0.1791{\tiny$\pm$0.0023} & 0.2755{\tiny$\pm$0.0030} & 0.1174{\tiny$\pm$0.0013} & 0.1484{\tiny$\pm$0.0016} \\
		\hline
		\multirow{3}{*}{Movie$\leftarrow$CD}
		         & DASL-single  & w/o $\mathcal{M}$ & 0.1763{\tiny$\pm$0.0007} & 0.2686{\tiny$\pm$0.0052} & 0.1170{\tiny$\pm$0.0012} & 0.1467{\tiny$\pm$0.0020} \\
		         & DASL         & w/o $\mathcal{M}$ & 0.1942{\tiny$\pm$0.0050} & 0.2858{\tiny$\pm$0.0051} & 0.1286{\tiny$\pm$0.0043} & 0.1580{\tiny$\pm$0.0042} \\
		         & PriCDSR-DASL & w/ $\mathcal{M}$  & 0.1823{\tiny$\pm$0.0054} & 0.2750{\tiny$\pm$0.0039} & 0.1202{\tiny$\pm$0.0027} & 0.1500{\tiny$\pm$0.0020} \\
		\hline
		\multirow{3}{*}{Book$\leftarrow$Movie}
		         & DASL-single  & w/o $\mathcal{M}$ & 0.1759{\tiny$\pm$0.0072} & 0.2681{\tiny$\pm$0.0054} & 0.1171{\tiny$\pm$0.0039} & 0.1466{\tiny$\pm$0.0033} \\
		         & DASL         & w/o $\mathcal{M}$ & 0.1860{\tiny$\pm$0.0018} & 0.2788{\tiny$\pm$0.0010} & 0.1235{\tiny$\pm$0.0017} & 0.1533{\tiny$\pm$0.0016} \\
		         & PriCDSR-DASL & w/ $\mathcal{M}$  & 0.1839{\tiny$\pm$0.0035} & 0.2821{\tiny$\pm$0.0070} & 0.1211{\tiny$\pm$0.0033} & 0.1527{\tiny$\pm$0.0044} \\
		\hline
		\multirow{3}{*}{Book$\leftarrow$CD}
		         & DASL-single  & w/o $\mathcal{M}$ & 0.1759{\tiny$\pm$0.0072} & 0.2685{\tiny$\pm$0.0091} & 0.1160{\tiny$\pm$0.0039} & 0.1459{\tiny$\pm$0.0042} \\
		         & DASL         & w/o $\mathcal{M}$ & 0.1830{\tiny$\pm$0.0035} & 0.2803{\tiny$\pm$0.0002} & 0.1216{\tiny$\pm$0.0025} & 0.1529{\tiny$\pm$0.0021} \\
		         & PriCDSR-DASL & w/ $\mathcal{M}$  & 0.1832{\tiny$\pm$0.0045} & 0.2802{\tiny$\pm$0.0034} & 0.1210{\tiny$\pm$0.0010} & 0.1520{\tiny$\pm$0.0025} \\
		\hline
		\multirow{3}{*}{CD$\leftarrow$Movie}
		         & DASL-single  & w/o $\mathcal{M}$ & 0.1833{\tiny$\pm$0.0048} & 0.2824{\tiny$\pm$0.0034} & 0.1225{\tiny$\pm$0.0042} & 0.1542{\tiny$\pm$0.0036} \\
		         & DASL         & w/o $\mathcal{M}$ & 0.1991{\tiny$\pm$0.0123} & 0.2990{\tiny$\pm$0.0127} & 0.1322{\tiny$\pm$0.0106} & 0.1642{\tiny$\pm$0.0106} \\
		         & PriCDSR-DASL & w/ $\mathcal{M}$  & 0.2025{\tiny$\pm$0.0150} & 0.3084{\tiny$\pm$0.0098} & 0.1348{\tiny$\pm$0.0123} & 0.1688{\tiny$\pm$0.0105} \\
		\hline
		\multirow{3}{*}{CD$\leftarrow$Book}
		         & DASL-single  & w/o $\mathcal{M}$ & 0.1615{\tiny$\pm$0.0044} & 0.2488{\tiny$\pm$0.0004} & 0.1066{\tiny$\pm$0.0041} & 0.1346{\tiny$\pm$0.0028} \\
		         & DASL         & w/o $\mathcal{M}$ & 0.1875{\tiny$\pm$0.0051} & 0.2851{\tiny$\pm$0.0048} & 0.1239{\tiny$\pm$0.0049} & 0.1552{\tiny$\pm$0.0048} \\
		         & PriCDSR-DASL & w/ $\mathcal{M}$  & 0.2052{\tiny$\pm$0.0061} & 0.3054{\tiny$\pm$0.0027} & 0.1381{\tiny$\pm$0.0057} & 0.1703{\tiny$\pm$0.0048} \\
		\hline
	\end{tabular}
\end{table*}

\subsection{Parameter Analysis (RQ3)}
We investigate the impact of the privacy budget $\epsilon$ on model performance, focusing on two data, i.e., Movie$\leftarrow$Book and CD$\leftarrow$Book.
We make this choice because the former results in lower performance of PriCDSR-DASL than DASL, and the later results in higher performance of PriCDSR-DASL than DASL.
We evaluate our PriCDSR-DASL with $\epsilon \in \{1,2,5,10,20,50\}$.
It seems that the values of the privacy budget are larger than that in some existing works \cite{dp-ldp, FCMF}.
However, this is acceptable since these works require noise to be added at each iteration.
As the number of iterations increases, the required privacy budget will increase.
Unlike them, our PriCDSR only needs to add noise once.
Although the privacy budget for this operation is relatively large, it will not increase again.
Moreover, a related work, PriCDR~\cite{PriCDR}, also employs a relatively large privacy budget.
We report the experimental results in Table \ref{tab:result2}, from which, we can have the following observations.
(1)
On the data Movie$\leftarrow$Book, the recommendation performance generally increases as the noise decreases (i.e., the value of $\epsilon$ increases).
The recommendation performance reaches the highest when $\epsilon=20$.
We think that the introduction of noise leads to this counterintuitive result.
The added noise can be regarded as generated pseudo samples, which may lead to improved performance of the recommendation model~\cite{psudo-item-1,psudo-item-2}.
(2)
On the data CD$\leftarrow$Book, the recommendation performance generally decreases as the value of $\epsilon$ increases.
The recommendation performance reaches the highest when $\epsilon=10$, which may also be due to the generated pseudo samples.

\begin{table*}[htbp]
	\caption{Recommendation performance of our PriCDSR-DASL with different privacy budget $\epsilon$.}
	\label{tab:result2}
	\centering
	\begin{tabular}{llcccc}
		\hline
		Datasets                               & Algorithms                  & HR@5                     & HR@10                    & NDCG@5                   & NDCG@10                  \\
		\hline
		\multirow{7}{*}{Movie$\leftarrow$Book} & PriCDSR-DASL($\epsilon=1$)  & 0.1713{\tiny$\pm$0.0014} & 0.2659{\tiny$\pm$0.0055} & 0.1123{\tiny$\pm$0.0007} & 0.1426{\tiny$\pm$0.0009} \\
		                                       & PriCDSR-DASL($\epsilon=2$)  & 0.1761{\tiny$\pm$0.0023} & 0.2751{\tiny$\pm$0.0054} & 0.1160{\tiny$\pm$0.0028} & 0.1478{\tiny$\pm$0.0034} \\
		                                       & PriCDSR-DASL($\epsilon=5$)  & 0.1750{\tiny$\pm$0.0020} & 0.2680{\tiny$\pm$0.0016} & 0.1155{\tiny$\pm$0.0027} & 0.1453{\tiny$\pm$0.0025} \\
		                                       & PriCDSR-DASL($\epsilon=10$) & 0.1791{\tiny$\pm$0.0023} & 0.2755{\tiny$\pm$0.0030} & 0.1174{\tiny$\pm$0.0013} & 0.1484{\tiny$\pm$0.0016} \\
		                                       & PriCDSR-DASL($\epsilon=20$) & 0.1920{\tiny$\pm$0.0023} & 0.2886{\tiny$\pm$0.0031} & 0.1277{\tiny$\pm$0.0018} & 0.1587{\tiny$\pm$0.0022} \\
		                                       & PriCDSR-DASL($\epsilon=50$) & 0.1856{\tiny$\pm$0.0086} & 0.2815{\tiny$\pm$0.0107} & 0.1239{\tiny$\pm$0.0048} & 0.1548{\tiny$\pm$0.0055} \\
		                                       & DASL                        & 0.1903{\tiny$\pm$0.0029} & 0.2860{\tiny$\pm$0.0042} & 0.1269{\tiny$\pm$0.0025} & 0.1578{\tiny$\pm$0.0029} \\
		\hline
		\multirow{7}{*}{CD$\leftarrow$Book}    & PriCDSR-DASL($\epsilon=1$)  & 0.1995{\tiny$\pm$0.0017} & 0.3046{\tiny$\pm$0.0016} & 0.1321{\tiny$\pm$0.0030} & 0.1659{\tiny$\pm$0.0025} \\
		                                       & PriCDSR-DASL($\epsilon=2$)  & 0.1999{\tiny$\pm$0.0003} & 0.3022{\tiny$\pm$0.0044} & 0.1330{\tiny$\pm$0.0008} & 0.1659{\tiny$\pm$0.0012} \\
		                                       & PriCDSR-DASL($\epsilon=5$)  & 0.1904{\tiny$\pm$0.0074} & 0.2944{\tiny$\pm$0.0124} & 0.1231{\tiny$\pm$0.0066} & 0.1566{\tiny$\pm$0.0082} \\
		                                       & PriCDSR-DASL($\epsilon=10$) & 0.2052{\tiny$\pm$0.0061} & 0.3054{\tiny$\pm$0.0027} & 0.1381{\tiny$\pm$0.0057} & 0.1703{\tiny$\pm$0.0048} \\
		                                       & PriCDSR-DASL($\epsilon=20$) & 0.1906{\tiny$\pm$0.0041} & 0.2913{\tiny$\pm$0.0047} & 0.1278{\tiny$\pm$0.0028} & 0.1602{\tiny$\pm$0.0031} \\
		                                       & PriCDSR-DASL($\epsilon=50$) & 0.1804{\tiny$\pm$0.0048} & 0.2822{\tiny$\pm$0.0044} & 0.1204{\tiny$\pm$0.0057} & 0.1532{\tiny$\pm$0.0049} \\
		                                       & DASL                        & 0.1875{\tiny$\pm$0.0051} & 0.2851{\tiny$\pm$0.0048} & 0.1239{\tiny$\pm$0.0049} & 0.1552{\tiny$\pm$0.0048} \\
		\hline
	\end{tabular}
\end{table*}

\section{Conclusions and Future Work}
\label{sec:conclu}
In this paper, we propose a privacy-preserving cross-domain sequential recommender system called PriCDSR to address the privacy concerns therein.
Our PriCDSR aims to protect the sensitive information in auxiliary domains and enable the legal operation and deployment of cross-domain sequential recommender systems.
To achieve this, we define a new differential privacy (DP), i.e., sequential DP (SDP), on the data of an auxiliary domain, considering both the ID information and the order information.
Then, we describe the steps of our PriCDSR in detail.
In particular, we design a novel random mechanism $\mathcal{M}$ to protect the sensitive information of the auxiliary domain.
We theoretically prove that the random mechanism $\mathcal{M}$ satisfies SDP.
This mechanism operates on the data from the auxiliary domains, producing output that can be directly utilized as input for the CDSR method in the target domain.
Therefore, we do not need to make any changes to the base model in the target domain to protect the users' privacy.
In other words, our PriCDSR method is non-invasive.
To the best of our knowledge, we are the first to investigate privacy issues in cross-domain sequential recommender systems.
Moreover, we empirically evaluate how much our PriCDSR sacrifices the recommendation performance.
The experimental results show that the sacrificed recommendation accuracy is smaller than the improvement brought by the introduction of the auxiliary data.

Although our PriCDSR is designed for cross-domain sequential recommender systems, its core component random mechanism $\mathcal{M}$ is not limited to those settings.
It can perhaps be applied in more ways.
For example, users perturb their own data locally rather than on the organization's server to protect their privacy.
For another example, the random mechanism $\mathcal{M}$ may be used for privacy protection of natural language sentences, as long as words are regarded as item IDs in the users' interaction sequences.
We will conduct more research in the future.

\section*{Acknowledgment}

We thank the support of National Natural Science Foundation of China No. 62172283, No. 62272315 and No. 61836005.
We thank Mr. Zitao Xu for his assistance and helpful discussions.

\bibliographystyle{IEEEtran}
\bibliography{IEEEabrv,PriCDSR-ref}

\end{document}